\documentclass[sigconf]{sig-alternate-per}

\usepackage{booktabs} 
%
%
%

\usepackage{gensymb}
\usepackage{graphicx}
\usepackage{url}
\usepackage{amsmath}
\usepackage{latexsym,amssymb,epsfig,color,verbatim}
\usepackage{algorithm}
\usepackage{subcaption}
\usepackage{algpseudocode}
\usepackage{pifont}
\usepackage{lipsum}
\usepackage{mwe}
\usepackage{slashbox}
\usepackage{color}
\usepackage{cancel}
\usepackage{multirow}
\usepackage{bbm}
\newcommand{\BEQA}{\begin{eqnarray}}
\newcommand{\EEQA}{\end{eqnarray}}

\newtheorem{lemma}{Lemma}
\newtheorem{conjecture}{Conjecture}
\newtheorem{theorem}{Theorem}

\newtheorem{definition}{Definition}
\newtheorem{remark}{Remark}
\newtheorem{proposition}{Proposition}
\usepackage{slashbox}





\begin{document}
\title{On the Analysis of Spatially Constrained \\Power of Two Choice Policies\thanks{This research was sponsored by the U.S. Army Research Laboratory and the U.K. Defence Science and Technology Laboratory under Agreement Number W911NF-16-3-0001 and by the NSF under grant NSF CNS-1617437. The views and conclusions contained in this document are those of the authors and should not be interpreted as representing the official policies, either expressed or implied, of the U.S. Army Research Laboratory, the U.S. Government, the U.K. Defence Science and Technology Laboratory. This document does not contain technology or technical data controlled under either the U.S. International Traffic in Arms Regulations or the U.S. Export Administration Regulations.}}
\author{Nitish K. Panigrahy$^\dagger$, Prithwish Basu$^*$, Don Towsley$^\dagger$, Ananthram Swami$^\mathsection$ and Kin K. Leung$^{**}$
 \\ {\normalsize $^\dagger$University of Massachusetts Amherst, MA, USA. Email: \{nitish, towsley\}@cs.umass.edu}
 \\ {\normalsize $*$Raytheon BBN Technologies, Cambridge, MA 02138, USA. Email: prithwish.basu@raytheon.com}
 \\ {\normalsize $^\mathsection$ Army Research Laboratory, Adelphi, MD 20783, USA. Email: ananthram.swami.civ@mail.mil}
 \\ {\normalsize $^{**}$Imperial College London, London SW72AZ, UK. Email: kin.leung@imperial.ac.uk}
}

\maketitle

\begin{abstract}
We consider a class of power of two choice based assignment policies for allocating users to servers, where both users and servers are located on a two-dimensional Euclidean plane.  In this framework, we investigate the inherent tradeoff between the communication cost, and load balancing performance of different allocation policies. To this end, we first design and evaluate a \emph{Spatial Power of two} (sPOT) policy in which each user is allocated to the least loaded server among its two geographically nearest servers sequentially. When servers are placed on a two-dimensional square grid, sPOT maps to the classical Power of two ({\it POT}) policy on the Delaunay graph associated with the Voronoi tessellation of the set of servers. We show that the associated Delaunay graph is $4$-regular and provide expressions for asymptotic maximum load using results from the literature. For uniform placement of servers, we map sPOT to a classical balls and bins allocation policy with bins corresponding to the Voronoi regions associated with the second order Voronoi diagram of the set of servers. We provide expressions for the lower bound on the asymptotic expected maximum load on the servers and prove that sPOT does not achieve POT load balancing benefits. However, experimental results suggest the efficacy of sPOT with respect to expected communication cost. Finally, we propose two non-uniform  server sampling based POT policies that achieve the best of both the performance metrics. Experimental results validate the effectiveness of our proposed policies.
\end{abstract}

\section{Introduction}\label{sec:intro}

Recent advances in technologies of smart devices, sensors and  embedded processors have enabled the deployment of a large number of computational  and storage resources in a physical space. We collectively call such a set of resources/ services a \emph{distributed service network}. In a distributed service network, for example, Internet of Things \cite{atzori10}, these resources are heavily accessed by various end users/applications that are also distributed across the physical space.

An important design problem in such service networks is the assignment of users/applications to appropriate servers/ resources. Often servers have limited computational capabilities and can only serve a handful number of users. Hence for such a system with very large number of users and servers,  an appropriate server selection strategy typically involves minimizing the maximum number of users assigned to a server, also known as the maximum \emph{load.} While the optimal server selection problem can be solved centrally, due to scalability concerns, it is often preferred to relegate the assignment task to individual users themselves. This interpretation leads to formulating a randomized \emph{load balancing} problem for the distributed service network with the goal to make the overall user-to-server assignment as fair as possible. Many previous works \cite{Adler1998, mitzenmacher96} have used randomization as an effective tool to develop simple and efficient load balancing algorithms in non-geographic settings. A randomized load balancing algorithm can be described as a classical balls and bins problem as follows.

In the classical balls-and-bins model of randomized load balancing, $m$ balls are placed sequentially into $n$ bins. Each ball samples $d$ bins uniformly at random and is allocated to the bin with the least load with ties broken arbitrarily. It is well known that when $d = 1$ and $m = n,$ this assignment policy results in a maximum load of $O(\log n/\log \log n)$ with high probability. However, if $d = 2$, then the maximum load is $O(\log \log n)$ w.h.p. \cite{Azar99}. Thus, there is an exponential improvement in performance from $d = 1$ to $d = 2$. This policy with $d = 2$ is widely known as \emph{Power of Two} (POT) choices and the improvement in maximum load behavior is known as \emph{POT benefits} \cite{mitzenmacher96}. 

While classical balls and bins based randomized load balancing can directly be used for user-to-server assignment in a geographic setting, it is oblivious to the spatial distribution of servers and users. The spatial distribution of servers and users in a service network is vital in determining the overall performance of the service. For example, the Euclidean distance between a user request and its allocated server, also known as \emph{request distance}, directly translates to communication latency incurred by a user when accessing the service. Also, in wireless networks, signal attenuation is strongly coupled to request distance, therefore developing allocation policies to minimize request distance can help reduce energy consumption. Thus the following natural question arises. \emph{How do we design a load balancing policy that captures the spatial distribution of users and servers?}

In this work, we aim to answer this question. To this end we propose a spatially motivated POT policy: \emph{spatial POT} (sPOT) in which each user is sequentially allocated to the least loaded server among its two geographically nearest servers. We assume both users and servers are placed in a two-dimensional Euclidean plane. When both servers and users are placed uniformly at random in the Euclidean plane, we map sPOT to a  classical balls and bins allocation policy with bins corresponding to the Voronoi regions associated with the second order Voronoi diagram of the set of servers. We show that sPOT performs better than POT in terms of average request distance. However, a lower bound analysis on the asymptotic expected maximum load for sPOT suggests that POT load balancing benefits are not achieved by sPOT. Inspired by the analysis of sPOT, we further propose two assignment policies and empirically show that these policies achieve the best of both request distance and maximum load behavior. 

Our contributions are summarized below:
\begin{enumerate}
\item Analysis of sPOT yielding lower bound expressions for asymptotic expected maximum load. 
\begin{itemize}
\item When servers are placed on a two-dimensional grid, we model sPOT using POT on the Delaunay graph associated with the Voronoi tessellation of the set of servers.
\item When users and servers are placed uniformly at random on a 2-D region, we model {\it sPOT} as classical balls and bins allocation policy with bins corresponding to the Voronoi regions associated with the second order Voronoi diagram of the set of servers.
\end{itemize}
\item Introduction of two non-uniform server sampling based POT policies to improve load and request distance behavior.
\begin{itemize}
\item A candidate set based policy that samples $k$ nearest servers from a user and applies POT on the candidate set.
\item A non-uniform distance decaying sampling based POT in which each user samples two servers with probability inversely proportional to square of the distance to the servers.
\end{itemize}
\item A simulation study validating the effectiveness of the proposed policies.
\end{enumerate}

The rest of this paper is organized as follows. The next section contains some technical preliminaries. In Section \ref{sec:spot} we formally analyze the load behavior of sPOT for different server placement settings. In Section \ref{sec:kspot}, we propose two non-uniform  server sampling based POT policies that achieve both better load and request distance behavior. Finally, the conclusion of this work and potential future work are given in Section \ref{sec:con}.
\section{Technical Preliminaries}\label{sec:tp}

We consider a distributed service network where users and servers are located on a two-dimensional Euclidean plane $\mathcal{D}$.
\subsection{Users and Servers}
\noindent {\bf The users:} Users in the service network are denoted as the set $R$ with cardinality $|R| = m.$ We assume users are placed on a two-dimensional euclidean plane uniformly at random.\\

\noindent {\bf The servers:} In a service network, each user is assigned to a server from the server set $S$ with  cardinality $|S| = n.$ We consider two cases for placing the servers on a two-dimensional euclidean plane, (i) Grid Placement: servers are placed on a square grid topology embedded in Euclidean space $\mathbb{R}^2.$ (ii) Uniform placement: servers are placed uniformly at random on the euclidean plane.

We consider the case where $m=n$. We define the following user allocation strategies.
\subsection{User Allocation Policies}
A user-to-server allocation policy is defined as a mapping $\pi$ such that $\pi: R \to S.$ In this paper, our goal is to analyze the performance of various allocation policies defined in the literature as follows.
\begin{itemize}

\item {\bf Power of One ({\it POO}):} This policy assigns each user to one of the servers chosen uniformly at random from $S$.

\item {\bf Power of Two ({\it POT}):} In this policy, sequentially\footnote{We assume that the users can communicate among themselves in order to agree on a sequential order and such communication cost between users is negligible.} each user samples two servers uniformly at random from $S$ and is allocated to the least loaded server.

\item {\bf Spatial Power of One ({\it sPOO}):} This policy assigns each user to its geographically nearest server.
\end{itemize}
We also propose new policies to reduce both the maximum load  and expected request distance. We define them as follows.
\begin{itemize}
\item {\bf Spatial Power of Two ({\it sPOT}):} Each user is sequentially allocated to the least loaded server among its two geographically nearest servers. 

\item {\bf Candidate set based sPOT ({\it k-sPOT}):} Each user uniformly at random samples two servers from a candidate set consisting of its $k$ geographically nearest servers and is assigned to the leastly loaded server.

\item {\bf Decay based Power of Two ({\it dPOT}):} This policy sequentially considers each user $r_i,$ samples two servers from $S$ (without replacement), each with probability proportional to $1/d_{ip}^{2}$. Here, $d_{ip}$ denotes the euclidean distance between user $r_i$ and server $s_p.$ $r_i$ is then assigned to the server with the least load.
\end{itemize}

\subsection{Performance Metric}
To evaluate and characterize the performance of various allocation policies, we consider the maximum asymptotic ($m \to \infty$) load across all servers and expected request distance as performance metrics.
\begin{definition}
Load on a server, $s \in S$, under allocation policy $\pi$, is defined as $L_s = |\{r|\pi(r)=s, \; \text{with}\; r \in R\}|.$
\end{definition} 	 
\begin{definition}
Request distance of a user request $r \in R$ under allocation policy $\pi$, is defined as $D_r = d_{r\pi(r)}.$ Here $d_{rs}$ denotes the euclidean distance between a user $r$ and a server $s.$
\end{definition}

\subsection{Geometric Structures}
We define the following geometric structures that are useful constructs for analyzing various user allocation policies. 
\subsubsection{Voronoi Diagram} 
A Voronoi cell around a server $s\in S$ is the set of points in $\mathcal{D}$ that are closer to $s$ than to any other server in $S\setminus\{s\}$ \cite{bash07}. The Voronoi diagram $V_S$ of $S$ is the set of Voronoi cells of servers in $S$.
\subsubsection{Delaunay Graph}
The Delaunay graph, $G_S(X,E),$ associated with $S$ is defined as follows. Assign the vertex set $X = S$ and add an edge between servers $u$ and $v$, i.e. $e = (u, v) \in E$ only if the Voronoi cells of $u$ and $v$ are adjacent.
\subsubsection{Higher order Voronoi diagram}
A $p^{th}$ order Voronoi diagram, $H_S^{(p)},$ is defined as partition of $\mathcal{D}$ into regions such that points in each region have the same $p$ closest servers in $S$.

\subsection{Majorization}
We present a few definitions and basic results associated with majorization theory that we apply to analyze sPOT policy later in Section \ref{sub:uniform}.

\begin{definition}
The vector $x$ is said to majorize the vector $y$ (denoted $x\succ y$) if
\begin{align}
&\sum\limits_{i=1}^k x_{[i]} \ge \sum\limits_{i=1}^k y_{[i]}, \; k = 1,\cdots,n-1,\nonumber\\
\text{and} &\sum\limits_{i=1}^n x_{[i]} = \sum\limits_{i=1}^n y_{[i]}
\end{align}
where $x_{[i]} (\text{or } y_{[i]})$ is the $i^{th}$ largest element of $x (\text{or } y)$.
\end{definition}

\begin{definition}
A function $f : R^n \rightarrow R$ is called Schur-convex if
\begin{align}
x \succ y \implies f(x) \ge f(y)
\end{align}
\end{definition}

Consider the following proposition (Chapter 11, \cite{marshall79})
\begin{proposition}\label{prop:multnom}
Let $X$ be a random variable having the multinomial distribution
\begin{align}
\Pr[X = x] = {n \choose {x_1,\cdots,x_n}} \prod\limits_{i=1}^{n} p_i^{x_i}
\end{align}
where $x = (x_1, . . . , x_n) \in \{z : z_i \; \text{are nonnegative integers}, \sum z_i = n\}.$ If $\phi$ is a Schur-convex function of $X$, then $\psi(p) = E_{p}\phi(X)$ is a Schur-convex function of $p$.
\end{proposition}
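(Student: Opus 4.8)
The plan is to verify the standard Schur--Ostrowski criterion (see \cite{marshall79}): a symmetric, differentiable function $\psi$ on the probability simplex is Schur-convex provided $(p_i-p_j)\bigl(\partial\psi/\partial p_i-\partial\psi/\partial p_j\bigr)\ge 0$ for all $i,j$. Symmetry of $\psi$ is immediate: $\phi$ is Schur-convex, hence symmetric, and the multinomial weight $\binom{n}{x_1,\dots,x_n}\prod_i p_i^{x_i}$ is invariant under simultaneously permuting the coordinates of $x$ and of $p$; permuting $p$ therefore only relabels the summation index in $\psi(p)=\sum_x\phi(x)\binom{n}{x_1,\dots,x_n}\prod_i p_i^{x_i}$ and leaves $\psi$ unchanged. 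By symmetry it then suffices to treat the pair $(i,j)=(1,2)$.

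Next I would compute $D:=\partial\psi/\partial p_1-\partial\psi/\partial p_2$ as a polynomial in $p$ and read off the coefficient of each monomial $p_1^{a}p_2^{b}\prod_{i\ge3}p_i^{x_i}$ of total degree $n-1$. Differentiating $p_1^{x_1}$ lowers its exponent by one, so the $\partial_1$ contribution comes from the index $(x_1,x_2)=(a+1,b)$ and the $\partial_2$ contribution from $(a,b+1)$. The crucial observation is that the two accompanying factors coincide:
\[
\binom{n}{a+1,b,\dots}(a+1)=\binom{n}{a,b+1,\dots}(b+1)=\frac{n!}{a!\,b!\,\prod_{i\ge3}x_i!}=:C_{a,b}.
\]
Hence the coefficient of that monomial in $D$ is exactly $C_{a,b}\,\bigl[\phi(a+1,b,\mathbf{x}')-\phi(a,b+1,\mathbf{x}')\bigr]$, where $\mathbf{x}'=(x_3,\dots,x_n)$.

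I would then pair the monomial indexed by $(a,b)$ with the one indexed by $(b,a)$ (same $\mathbf{x}'$). Using $C_{a,b}=C_{b,a}$ and the symmetry of $\phi$, the two coefficients are negatives of one another, so the paired contribution to $D$ is $C_{a,b}\,\Delta\,(p_1^{a}p_2^{b}-p_1^{b}p_2^{a})$ with $\Delta:=\phi(a+1,b,\mathbf{x}')-\phi(a,b+1,\mathbf{x}')$, while the diagonal terms $a=b$ vanish. Multiplying by $(p_1-p_2)$ and taking, without loss of generality, $a>b$, the algebraic factor satisfies
\[
(p_1-p_2)\bigl(p_1^{a}p_2^{b}-p_1^{b}p_2^{a}\bigr)=p_1^{b}p_2^{b}\,(p_1-p_2)\bigl(p_1^{a-b}-p_2^{a-b}\bigr)\ge0,
\]
since $(p_1-p_2)$ and $(p_1^{a-b}-p_2^{a-b})$ share the same sign. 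It remains to sign $\Delta$: when $a>b$ one checks $(a+1,b)\succ(a,b+1)$, whence $(a+1,b,\mathbf{x}')\succ(a,b+1,\mathbf{x}')$ because appending equal coordinates preserves majorization, and Schur-convexity of $\phi$ gives $\Delta\ge0$. Summing the paired contributions yields $(p_1-p_2)D\ge0$, the required criterion.

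I expect the main obstacle to be bookkeeping rather than any deep difficulty: correctly reindexing the differentiated sum so that the $\partial_1$ and $\partial_2$ contributions align on a common monomial, and confirming the identity $C_{a,b}$ that collapses each pair to the single difference $\Delta$. A secondary subtlety is that Schur--Ostrowski is usually stated on an open symmetric set, whereas $p$ lives on the constrained simplex $\{p\ge0,\ \sum_i p_i=1\}$. To sidestep this I would phrase the argument through a single T-transform, parametrizing $p(t)=(p_1-t,\,p_2+t,\,\mathbf{x}')$ that equalizes the first two coordinates and observing $\tfrac{d}{dt}\psi(p(t))=-D\le0$ whenever $p_1-t\ge p_2+t$. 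Since any $p\succ q$ is reached from $p$ by finitely many such transfers, this delivers $\psi(p)\ge\psi(q)$ directly and stays within the simplex throughout.
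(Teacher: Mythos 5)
Your proof is correct, but there is nothing in the paper to compare it against: the paper does not prove this proposition at all, importing it verbatim from Chapter 11 of \cite{marshall79} (where it arises in the theory of stochastic majorization, going back to Rinott-type results). What you have supplied is a self-contained, elementary replacement for that citation, and it holds up. The coefficient identity $\binom{n}{a+1,b,\dots}(a+1)=\binom{n}{a,b+1,\dots}(b+1)$ does collapse $\partial\psi/\partial p_1-\partial\psi/\partial p_2$ into paired terms $C_{a,b}\,\Delta\,(p_1^{a}p_2^{b}-p_1^{b}p_2^{a})$; the majorization $(a+1,b,\mathbf{x}')\succ(a,b+1,\mathbf{x}')$ for $a>b$ is right (appending common coordinates preserves majorization, e.g.\ by the doubly stochastic characterization), so Schur-convexity of $\phi$ signs $\Delta$; and each paired term times $(p_1-p_2)$ is nonnegative --- in effect a Muirhead-style argument. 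Your closing maneuver is also the correct way to handle the one genuine subtlety: the simplex has empty interior in $\mathbb{R}^n$, so quoting Schur--Ostrowski off the shelf would be sloppy, whereas integrating $\frac{d}{dt}\psi(p(t))=-D\le 0$ along T-transforms (finitely many of which carry $p$ to any $q$ with $p\succ q$, by Hardy--Littlewood--P\'olya) yields $\psi(p)\ge\psi(q)$ while never leaving the simplex; note that your pointwise inequality $(p_1-p_2)D\ge 0$ holds on the whole nonnegative orthant, which is exactly what the path argument needs. Two cosmetic repairs: in the T-transform parametrization the tail should be $(p_3,\dots,p_n)$, not $\mathbf{x}'$ (that symbol already denotes the tail of $x$); and you should say explicitly that $D$ vanishes when $p_1=p_2$ (every factor $p_1^{a}p_2^{b}-p_1^{b}p_2^{a}$ is then zero), so the derivative bound holds up to and including the point where the two transferred coordinates meet.
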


\section {Spatial Power of Two policy}\label{sec:spot}

\begin{figure*}[htbp]
\centering
\begin{minipage}{.25\textwidth}
\centering
\includegraphics[width=\linewidth]{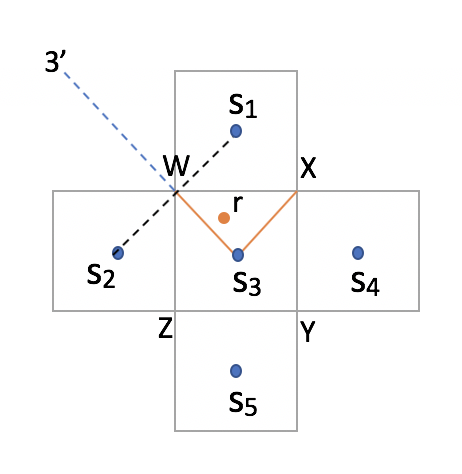}
\caption{Second nearest region for user $r$}
\label{fig:sub_reg}
\end{minipage}\hfill
\begin{minipage}{.3\textwidth}
\centering
\includegraphics[width=\linewidth]{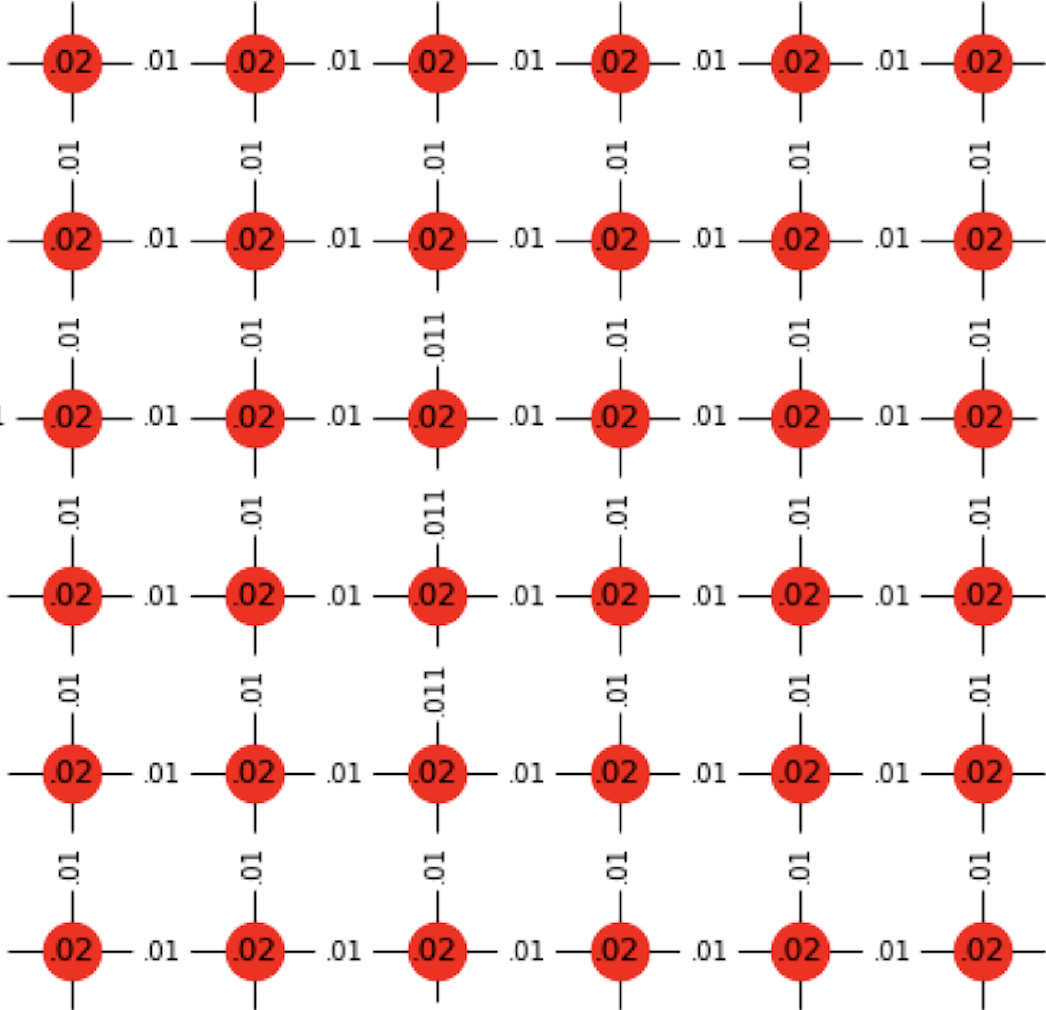}
\caption{Delaunay Graph associated with grid based server placement}
\label{fig:del_grid}
\end{minipage}\hfill
\begin{minipage}{.3\textwidth}
\centering
\includegraphics[width=\linewidth]{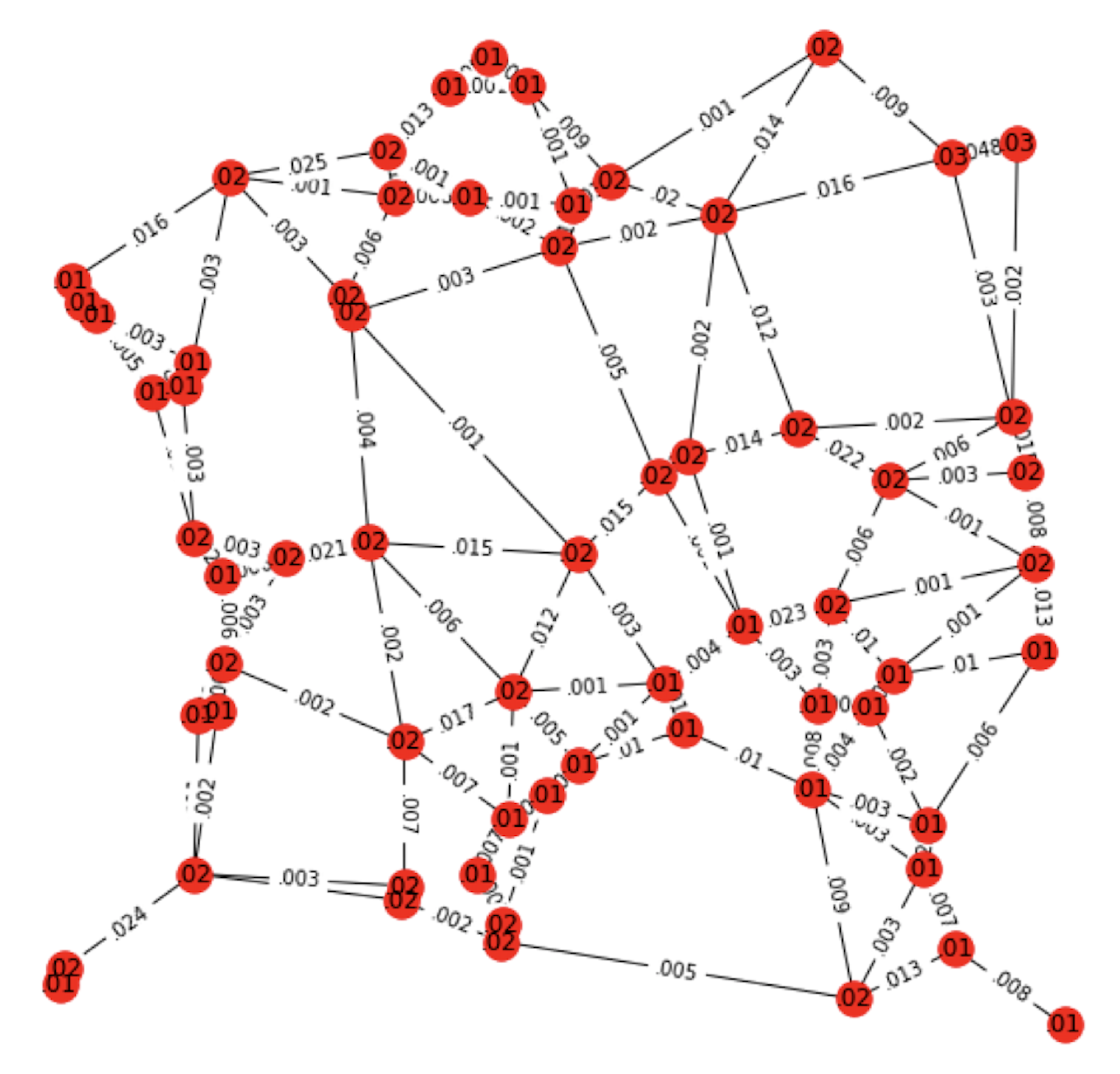}
\caption{Delaunay Graph associated with uniform server placement}
\label{fig:del_uniform}
\end{minipage}
\vspace{-0.5cm}
\end{figure*}

We now analyze the load behavior of sPOT policy for various server placement settings. We assume users are placed uniformly at random on $\mathcal{D}.$

\subsection{sPOT with Grid based server placement}
Consider the case where servers are placed on a two dimensional square grid: $\sqrt{n} \times \sqrt{n}$ on $\mathcal{D}$ with wrap-around.  
Let $B(\{s_1,s_2\},r)$ be the event that the two nearest servers of $r$ are in $\{s_1, s_2\}.$  We prove the following Lemma.

\begin{lemma}\label{lem:equalprobs}
Let $G_S(X,E)$ denote the Delaunay graph associated with $S$ when servers are placed on a two-dimensional square grid. Then
\begin{equation}\label{eq:bfinal}
    \Pr[B(\{s_i,s_j\},r)]=
\begin{cases}
     \frac{1}{|E|},& (s_i, s_j) \in E;\\
     0,& \text{otherwise}.
\end{cases}
\end{equation}
\end{lemma}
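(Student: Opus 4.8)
The plan is to establish two facts: (i) only pairs of servers that are adjacent in the Delaunay graph can ever be the two nearest servers of a point, which immediately yields the ``otherwise'' branch; and (ii) by the symmetry of the grid-with-wrap-around, every admissible pair is equally likely, which pins the common value at $1/|E|$.

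First I would prove the purely geometric fact that if $s_i$ and $s_j$ are the nearest and second-nearest servers to a point $r$, then $(s_i,s_j)\in E$. Assume $d(r,s_i)\le d(r,s_j)\le d(r,s_k)$ for every other server $s_k$, and walk along the segment from $r$ toward $s_j$. Since $s_i$ is (weakly) closer at $r$ while $s_j$ is closer at the endpoint $s_j$, there is a point $q$ on the segment with $d(q,s_i)=d(q,s_j)$. For any other server $s_k$, the triangle inequality gives $d(q,s_k)\ge d(r,s_k)-d(r,q)\ge d(r,s_j)-d(r,q)=d(q,s_j)$, where the final equality uses that $q$ lies between $r$ and $s_j$. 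Hence $q$ is equidistant from $s_i$ and $s_j$ and no server is strictly closer, so $q$ lies on the shared boundary of the Voronoi cells of $s_i$ and $s_j$; the cells are adjacent and $(s_i,s_j)\in E$. Contrapositively, if $(s_i,s_j)\notin E$ then $B(\{s_i,s_j\},r)$ is impossible, giving probability $0$.

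Next I would translate the probability into an area. Because $r$ is uniform on $\mathcal{D}$, $\Pr[B(\{s_i,s_j\},r)]$ equals the fraction of $\mathcal{D}$ occupied by the second-order Voronoi region of $H_S^{(2)}$ associated with the unordered pair $\{s_i,s_j\}$, i.e. the set of points whose two closest servers are exactly $s_i$ and $s_j$. The loci of points equidistant from three or more servers have measure zero, so these pair-regions partition $\mathcal{D}$ up to a null set and their probabilities sum to $1$; by step (i), only the $|E|$ Delaunay edges can carry positive probability. Finally I would invoke the symmetry group of the grid torus: the translations act transitively on servers, and the $90^\circ$ rotation about any server cyclically permutes its four incident edges, so the group acts transitively on edges. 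Each such symmetry is an area-preserving isometry of $\mathcal{D}$ that maps the pair-region of an edge $e$ onto the pair-region of its image, forcing all admissible regions to have equal area. Equal areas that sum to the whole domain over $|E|$ edges give each admissible probability the value $1/|E|$.

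The main obstacle is the symmetry step: the equal-area conclusion relies essentially on the wrap-around (toroidal) boundary condition, which is what makes the placement vertex- and edge-transitive; without it, regions near the boundary would differ in area and the uniform value $1/|E|$ would fail. The remaining point requiring care is the measure-zero treatment of ties (points equidistant from three or more servers), which must be discharged so that the pair-regions genuinely tile $\mathcal{D}$ and the probabilities sum to $1$.
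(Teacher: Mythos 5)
Your proof is correct, but it takes a genuinely different route from the paper's. The paper argues locally and computationally: it decomposes $\Pr[B(\{s_i,s_j\},r)]$ into $\Pr[A(s_i,r,1)]\Pr[A(s_j,r,2)\mid NN(r)=s_i]$ plus the symmetric term, uses equal Voronoi cell areas to get $\Pr[A(s,r,1)]=1/|S|$, uses an equal-area triangle decomposition \emph{inside} a single Voronoi cell to show the conditional second-nearest distribution is uniform ($1/4$) over the four grid neighbors, and then converts the resulting value $1/(2|S|)$ into $1/|E|$ via $4$-regularity ($|E|=2|S|$). You instead argue globally: you identify $\Pr[B(\{s_i,s_j\},r)]$ with the area of the corresponding cell of the second-order Voronoi diagram $H_S^{(2)}$, show via the segment-walking/continuity argument that only Delaunay-adjacent pairs have cells of positive area (which handles the ``otherwise'' branch more rigorously than the paper, where the analogous zero case in Equation \eqref{eq:cond2} is asserted from the picture), and then invoke edge-transitivity of the torus isometry group plus normalization to pin the common value at $1/|E|$ without ever computing $|E|$ or the intermediate value $1/(2|S|)$. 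Your approach buys generality and rigor: the walking argument and the transitivity argument adapt immediately to any edge-transitive placement (e.g., the triangular grid of the paper's Remark 2, where the paper would have to redo the triangle decomposition), and the measure-zero tie issues you flag are real and correctly discharged. The paper's approach buys finer structural information: it exhibits the conditional sampling distribution (nearest server uniform, then a uniform random Delaunay neighbor), which is exactly what makes the subsequent reduction to the edge-sampling balls-and-bins model of Lemma \ref{lm:graph} in Theorem \ref{th:grid} transparent. One small point of care in your step (i): from a generic $r$ (ties excluded), your inequality is actually strict, $d(q,s_k)>d(q,s_j)$, which is what you need to conclude that $q$ lies in the relative interior of a one-dimensional shared face rather than at a Voronoi vertex (where cells meet only in a point, as diagonal cells do on the square grid); you should state that refinement explicitly, since sharing a single point would not make $(s_i,s_j)$ a Delaunay edge.
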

\begin{proof}
Let $A(s,r,l)$ denote the event that a random user $r \in R$ is $l^{th}$ closest to $s\in S$ among all servers in $S.$ Denote $NN(r)$ as the geographically nearest server of $r.$ Thus we have 
\begin{align}
\Pr[B(\{s_i,s_j\},r)] = &\Pr[A(s_i,r,1)]\Pr[A(s_j,r,2)|NN(r) = s_i] \nonumber\\
&+ \Pr[A(s_j,r,1)]\Pr[A(s_i,r,2)|NN(r) = s_j].\label{eq:two_near_prob}
\end{align}

It is not difficult to show that all Voronoi cells in $V_S$ have equal areas. As $\Pr[A(s,r,1)]$  is proportional to the area of the Voronoi cell surrounding $s,$ we have 
\begin{align}\label{eq:nn}
\Pr[A(s,r,1)] = 1/|S|\;\forall\;s\in S.
\end{align}

Without loss of generality ({\it W.l.o.g.}) consider a user $r$ placed uniformly at random on $\mathcal{D}$ as shown in Figure \ref{fig:sub_reg}. Denote $\bigtriangleup ABC$ as the triangle associated with vertices $A, B$ and $C.$ Let $NN(r) = s_3.$ We now evaluate $\Pr[A(s_1,r,2)|NN(r) = s_3].$ Clearly, $\Pr[A(s_1,r,2)|\allowbreak NN(r) = s_3]\propto$ Area($\bigtriangleup WXs_3$). We also have 
\begin{align}
\text{Area}(\bigtriangleup WXs_3) &= \text{Area}(\bigtriangleup WZs_3) = \text{Area}(\bigtriangleup YXs_3)\nonumber\\
& = \text{Area}(\bigtriangleup ZYs_3),\nonumber
\end{align}
Therefore $\Pr[A(s_i,r,2)|NN(r) = s_3],$ for $i \in \{1, 2, 4, 5\}$  are all equal. Let $NG(s)$ be the set of neighboring servers of a server $s \in S$ on the square grid. Thus, we have
\begin{equation}\label{eq:cond2}
    \Pr[A(s_j,r,2)|NN(r) = s_i]=
\begin{cases}
     \frac{1}{4},& s_j \in NG(s_i);\\
     0,& \text{otherwise},
\end{cases}
\end{equation}
Note that when $s_j \in NG(s_i),$ the Voronoi cells corresponding to $s_i$ and $s_j$ share an edge. In this case, by definition $(s_i, s_j) \in E.$ Combining Equations \eqref{eq:nn} and \eqref{eq:cond2} and substituting in Equation \eqref{eq:two_near_prob} we get 
\begin{equation}\label{eq:bfinal2}
    \Pr[B(\{s_i,s_j\},r)]=
\begin{cases}
     \frac{1}{2|S|},& (s_i, s_j) \in E;\\
     0,& \text{otherwise},
\end{cases}
\end{equation}
Also, when servers are placed on a square grid, $G_S(X,E)$ is $4$- regular. Thus the total number of edges is $|E| = 2|X| =2|S|.$ Substituting $|S| = |E|/2$ in Equation \eqref{eq:bfinal2} gives us the expression \eqref{eq:bfinal} and completes the proof.
\end{proof}

We consider the following lemma presented in \cite{kenthapadi06}.
\begin{lemma}\label{lm:graph}
Given a $\Delta$-regular graph with $n$ nodes representing $n$ bins, if $n$ balls are thrown into the bins by choosing a random edge and placing into the smaller of the two bins connected by the edge, then the maximum load is at least $\Omega(\log\log n +\frac{\log n}{\log(\Delta\log n)})$ with high probability of $1-1/n^{\Omega(1)}.$
\end{lemma}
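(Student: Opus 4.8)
The plan is to establish the two summands separately, since for nonnegative $a,b$ one has $a+b=\Theta(\max(a,b))$, so it suffices to exhibit w.h.p. a bin of load $\Omega(\log\log n)$ and, separately, a bin of load $\Omega\!\big(\tfrac{\log n}{\log(\Delta\log n)}\big)$. Moreover, only the dominant term must be argued tightly in each regime: for $\Delta$ polynomial in $n$ the first term dominates, whereas for $\Delta$ polylogarithmic or constant the second dominates and already exceeds $\log\log n$.

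For the $\log\log n$ bound I would first note that on the complete graph $K_n$ (the case $\Delta=n-1$) picking a uniform random edge is identical to sampling an unordered pair of distinct bins, so the process is exactly {\it POT}, and the classical lower bound gives maximum load at least $\log_2\log n-O(1)$ w.h.p. The transfer to a general $\Delta$-regular graph I would obtain either by a monotonicity/coupling argument (restricting the admissible pairs to the edges of a sparser graph cannot improve balance, so the maximum load stochastically dominates the $K_n$ value), or, avoiding the coupling, by a direct layered-induction lower bound that tracks $\nu_i$, the number of bins of load at least $i$, and maintains $\nu_{i+1}\gtrsim \nu_i^2/n$ from below; starting from $\nu_1=\Theta(n)$ this doubly-exponential recursion keeps $\nu_i\ge 1$ until $i\approx\log_2\log n$.

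The second term is the crux, and is where $\Delta$-regularity enters. I would use a witness/branching argument built on the observation that a ball reaches height $h$ only if both endpoints of the edge it samples are already at load at least $h-1$. Unrolling this recursively yields a height-$h$ witness structure that must be embeddable in the graph; balancing the per-edge sampling probability $2/(n\Delta)$ against the number of embeddings and the $n$ balls produces a multiplicative shrinkage of order $\Delta\log n$ per level -- the $\log n$ being the intrinsic balls-into-bins (Poisson-tail) factor that yields the power-of-one load $\log n/\log\log n$ when $\Delta=O(1)$, and $\Delta$ being the choice improvement. Requiring the expected number of height-$h$ bins to stay at least one then yields survival up to $h=\Theta(\log_{\Delta\log n} n)=\Theta(\log n/\log(\Delta\log n))$, and a second-moment computation on this count upgrades the expectation bound to existence w.h.p.

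I expect the main obstacle to be dependency control for the second term: placement is sequential and adaptive, each bin's load depends on the whole history and on the global graph, and the witness structures certifying a tall bin overlap and share edges. The technical steps are therefore (i) to Poissonize the arrivals to decouple the balls, (ii) to apply a bounded-differences inequality to concentrate $\nu_i$ at each level, and (iii) to use $\Delta$-regularity to cap the number of edges with both endpoints already high, pinning the per-level branching factor to $\Theta(\Delta\log n)$ rather than something larger. Securing the denominator as $\log(\Delta\log n)$, rather than a weaker $\log\Delta$ or $\log\log n$, is exactly what this bookkeeping buys, and it is the step I expect to be most delicate.
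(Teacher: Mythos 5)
First, note that the paper never proves this lemma itself: it is quoted verbatim from Kenthapadi and Panigrahy \cite{kenthapadi06}, so your attempt must be measured against that paper's argument. Against that standard, your proposal has two genuine gaps, one per term. For the $\log\log n$ term, both of your proposed routes fail as stated. The coupling/monotonicity claim (``restricting the admissible pairs to the edges of a sparser graph cannot improve balance, so the max load stochastically dominates the $K_n$ value'') is precisely the content of the theorem in the dense regime, not a tool available to prove it; no elementary coupling of this kind is known, and you do not construct one. Your fallback, a layered induction maintaining $\nu_{i+1}\gtrsim \nu_i^2/n$, is simply invalid on a general $\Delta$-regular graph: the probability that a uniformly random edge has both endpoints in the high-load set $S_i$ is $|E(S_i)|/|E|$, not $(\nu_i/n)^2$, and these can differ wildly. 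A $\Delta$-regular graph can contain independent sets of size $n/2$ (a long cycle, or a disjoint union of copies of $K_{\Delta,\Delta}$), so a set $S_i$ of half the vertices may contain \emph{no} internal edges, and nothing in your argument prevents the process from producing such an $S_i$. Controlling the edge structure of the high-load sets --- not merely their cardinality --- is exactly the difficulty that distinguishes the graph setting from classical POT, and your sketch does not engage with it.

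For the term $\Omega\bigl(\log n/\log(\Delta\log n)\bigr)$, your witness-tree unrolling (``a ball reaches height $h$ only if both endpoints are already at height $h-1$'') is an upper-bound technique; pressing it into lower-bound service forces you into the Poissonization/bounded-differences/embedding-count program you describe, and you give no way to lower-bound the expected number of your recursively defined height-$h$ structures under adaptive, history-dependent placement --- this is where the proposal would collapse. The cited proof is far simpler and needs none of that machinery: the edge choices are i.i.d.\ uniform over the $n\Delta/2$ edges, so the number of balls that land on a fixed edge is binomial with mean $2/\Delta$, and a first-plus-second-moment computation over edges shows that w.h.p.\ some single edge receives $k=\Omega\bigl(\log n/\log(\Delta\log n)\bigr)$ balls; by pigeonhole one endpoint of that edge ends with load at least $k/2$, deterministically, regardless of what every other ball does. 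The certificate is measurable with respect to the i.i.d.\ edge draws alone and never references the placement rule, so the dependency issues you flag as the ``crux'' evaporate. Your arithmetic --- balancing the per-edge probability $2/(n\Delta)$ against $n$ balls to get per-level shrinkage $\Theta(\Delta\log n)$ --- does recover the correct threshold, but the structural argument wrapped around it is both unnecessary and unproven.
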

\noindent We now prove the following theorem.
\begin{theorem}\label{th:grid}
Suppose servers are placed on a two dimensional square grid : $\sqrt{n} \times \sqrt{n}$ on $\mathcal{D}$ with wrap-around. Let users be placed independently and uniformly at random on $\mathcal{D}.$ Under sPOT, the maximum load over all servers is at least $\Omega(\frac{\log n}{\log\log n})$ with high probability of $1-1/n^{\Omega(1)}.$
\end{theorem}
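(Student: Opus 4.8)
The plan is to reduce the sPOT allocation on the grid to the abstract ``random-edge, smaller-endpoint'' balls-and-bins process of Lemma~\ref{lm:graph}, and then substitute the regularity parameter and simplify the resulting bound asymptotically. The two ingredients are already in place: Lemma~\ref{lem:equalprobs} tells us how users select their pair of candidate servers, and the proof of that lemma establishes that the Delaunay graph $G_S(X,E)$ is $4$-regular on the torus.

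First I would observe that, under sPOT, each user is assigned to the less loaded of its two geographically nearest servers, and by Lemma~\ref{lem:equalprobs} the unordered pair of those two servers is a uniformly random edge of $G_S$: the pair $\{s_i,s_j\}$ is chosen with probability $1/|E|$ when $(s_i,s_j)\in E$ and with probability $0$ otherwise. Since users are placed independently and uniformly at random, the $n$ user placements yield $n$ independent draws of a uniform edge, and each ball (user) is dropped into the smaller-loaded of the two endpoint bins (servers). This is exactly the stochastic process described in Lemma~\ref{lm:graph}, with $n$ bins, $n$ balls, and regularity parameter $\Delta = 4$.

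Applying Lemma~\ref{lm:graph} then gives a maximum load of at least $\Omega\!\left(\log\log n + \frac{\log n}{\log(\Delta\log n)}\right)$ with high probability. It remains to simplify this expression for $\Delta = 4$. Because $\Delta$ is a constant, $\log(\Delta\log n) = \log 4 + \log\log n = \Theta(\log\log n)$, so the second term is $\Theta\!\left(\frac{\log n}{\log\log n}\right)$, which dominates the first term $\log\log n$. Hence the maximum load is $\Omega\!\left(\frac{\log n}{\log\log n}\right)$ with probability $1 - 1/n^{\Omega(1)}$, as claimed.

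The proof carries essentially no computational difficulty; the substantive content lives entirely in the reduction. The step I would be most careful about is verifying that the wrap-around (torus) topology is what makes $G_S$ exactly $4$-regular everywhere---without it, boundary servers would have fewer neighbors and Lemma~\ref{lm:graph}, which assumes a single regularity parameter $\Delta$, would not apply directly. Granting that uniformity (already exploited in Lemma~\ref{lem:equalprobs}), the remaining asymptotic simplification is routine.
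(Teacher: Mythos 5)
Your proposal matches the paper's proof essentially step for step: both map servers to bins and users to balls, invoke Lemma~\ref{lem:equalprobs} to conclude that each user selects a uniformly random edge of the $4$-regular Delaunay graph and joins the less loaded endpoint, and then apply Lemma~\ref{lm:graph} with $\Delta = 4$. Your explicit simplification of $\Omega\bigl(\log\log n + \frac{\log n}{\log(\Delta\log n)}\bigr)$ to $\Omega\bigl(\frac{\log n}{\log\log n}\bigr)$ for constant $\Delta$, and your remark on why the wrap-around is needed for exact regularity, are details the paper leaves implicit but are correct and welcome.
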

\begin{proof}
Suppose we map the set of servers to the bins and the users to the balls. The delaunay graph $G_S$ is $4$-regular. Let $e = (s_i, s_j)$ be an edge in $G_S.$ From Lemma \ref{lem:equalprobs}, it is clear that each user (ball) selects an edge $e$ with probability $1/|E|$ (i.e. uniformly at random) and gets allocated to the smaller of the two servers (bins) connected by $e$ under sPOT policy. Thus a direct application of Lemma \ref{lm:graph} with $\Delta = 4$ proves the theorem.
\end{proof}
We verify the results in Lemma \ref{lem:equalprobs} through simulation for a 2D square grid under sPOT policy as shown in Figure \ref{fig:del_grid}. We assign $n=64$ and empirically compute $\Pr[B(\{s_i,s_j\},r)]$ and denote it as edge probability on edge $e$ on the Delaunay graph. We also verify the $\Pr[A(s,r,1)]$ in expression \eqref{eq:nn} and denote it as vertex probability on the Delaunay graph. It is clear from Figure \ref{fig:del_grid} that the edge probabilities are almost all equal and so are the vertex probabilities.
\begin{remark}
Note that, Theorem \ref{th:grid} ensures that no POT benefit is observed when servers are placed on a two dimensional square grid.
\end{remark}
\begin{remark}
Note that Theorem \ref{th:grid} applies to other grid graphs such as a triangular grid, i.e. no POT benefit is observed when servers are placed on a two dimensional triangular grid. The delaunay graph corresponding to a triangular grid based server placement is $6$- regular.
\end{remark}
\subsection{sPOT with Uniform server placement}\label{sub:uniform}
We now consider the case when both users and servers are placed uniformly at random on $\mathcal{D}.$
We no longer can invoke Lemma \ref{lm:graph}. This is due to the fact that the Delaunay graph associated with the servers is no longer regular. Also, the edge sampling probabilities $\Pr[B(s_i,s_j,r)]$ are no longer equal. This is evident from our simulation results on the corresponding Delaunay graph as shown in Figure \ref{fig:del_uniform}. We have $n=64$ servers placed randomly in a 2D square and empirically compute $\Pr[B(\{s_i,s_j\},r)]$ and denote it as edge probability on edge $e$ on the Delaunay graph. Note that the edge probabilities, i.e. $\Pr[B(\{s_i,s_j\},r)]$, are all completely different from each other. Also the Delaunay graph is not regular. Thus we resort to using the second order Voronoi diagram to analyze the  maximum asymptotic load behavior.

Consider the second order Voronoi diagram: $H_S^{(2)}$ associated with the set of servers $S.$ We have the following Lemma [Chapter 3.2, \cite{okabe92}].
\begin{lemma}\label{lm:cellsinhv2}
The number of Voronoi cells in $H_S^{(2)}$ under uniform server placement is upper bounded by $O(3n)$ .
\end{lemma}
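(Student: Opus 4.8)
The plan is to bound the number of cells of $H_S^{(2)}$ by the number of edges of the Delaunay graph $G_S(X,E)$ and then exploit the planarity of $G_S$. First I would reduce the statement to counting \emph{realized pairs}. By definition each region of $H_S^{(2)}$ is the locus of points of $\mathcal{D}$ having a fixed unordered pair $\{s_i,s_j\}$ as their two nearest servers. Such a region is the intersection of the half-planes $\{x : d_{xs_i} \le d_{xs_u}\}$ and $\{x : d_{xs_j} \le d_{xs_u}\}$ over all $s_u \in S \setminus \{s_i,s_j\}$, hence it is convex and in particular connected. Consequently distinct pairs index distinct regions and no pair can produce two separate regions, so the number of cells in $H_S^{(2)}$ equals the number of unordered pairs $\{s_i,s_j\}$ that arise as the two nearest servers of at least one point.

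The key step is to show that every such realized pair is an edge of $G_S$. Fix a point $x$ whose nearest and second-nearest servers are $s_i$ and $s_j$, at distances $r_1 \le r_2$ respectively, so that every other server lies at distance at least $r_2$ from $x$. I would produce an empty circle through $s_i$ and $s_j$: slide the center along the segment $\overline{x\,s_j}$ from $x$ towards $s_j$, shrinking the radius so that the moving disk always passes through $s_j$. At the start the disk is centered at $x$ with radius $r_2$, so it contains $s_i$ strictly in its interior and excludes every other server; as the center reaches $s_j$ the disk degenerates to a point. By continuity there is an intermediate disk carrying both $s_i$ and $s_j$ on its boundary. Since all disks in this one-parameter family pass through $s_j$ and have collinear centers, they are internally tangent at $s_j$ and hence nested inside the initial disk; therefore the intermediate disk still contains no other server in its interior. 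This empty-circle certificate shows that the Voronoi cells of $s_i$ and $s_j$ are adjacent, i.e. $(s_i,s_j)\in E$.

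It follows that the realized pairs inject into the edge set $E$, so the number of cells of $H_S^{(2)}$ is at most $|E|$. The Delaunay graph of $n$ points in the plane is a simple planar graph on $n$ vertices, so Euler's formula gives $|E| \le 3n-6$. Combining the two bounds, the number of second-order Voronoi cells is at most $3n-6 = O(3n)$, which is the desired estimate.

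The hard part is the geometric lemma of the second paragraph, namely that the two nearest servers of any point must be Delaunay-adjacent; this needs the nesting/continuity argument above to guarantee that the shrinking disk never admits a spurious third server. The remaining ingredients, the convexity (hence connectedness and uniqueness-of-pair) observation and the $3n-6$ planar edge bound, are routine.
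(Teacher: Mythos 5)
The paper does not actually prove this lemma---it simply cites it from the literature (Chapter 3.2 of Okabe et al.), so there is no in-paper proof to compare against. Your argument is a correct, self-contained derivation of the cited fact, and it follows the standard route by which this bound is usually obtained: (i) each non-empty cell of $H_S^{(2)}$ is an intersection of half-planes $\{x : d_{xs_i} \le d_{xs_u}\} \cap \{x : d_{xs_j} \le d_{xs_u}\}$, hence convex, so cells are in bijection with realized nearest-pairs; (ii) the shrinking-disk (empty-circle) argument, with the nesting of internally tangent disks along the segment toward $s_j$, correctly certifies that every realized pair is a Delaunay edge; (iii) planarity of the Delaunay graph gives $|E| \le 3n-6$. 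Your conclusion is in fact sharper than the statement: you get the explicit bound $3n-6$, whereas ``$O(3n)$'' is just $O(n)$ with a sloppily written constant. The only caveat worth flagging is that step (ii) implicitly assumes general position: if the empty circle you construct happens to pass through a third server, its center is a Voronoi vertex and the cells of $s_i$ and $s_j$ could share only that vertex rather than an edge; under uniform random placement such degeneracies occur with probability zero, so this is consistent with the paper's (informal) setting, but a careful write-up should say so.
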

We also have the following Lemma.
\begin{lemma}\label{lm:majorization}
Consider the following modified version of balls and bin problem. Suppose there are $n$ balls and $n$ bins. Each ball is thrown into one of the bins according to a probability distribution $p = (p_1, \cdots\allowbreak, p_n)$ with $p_i$ being the probability of each ball falling into bin $i$, in an independent manner. 
Denote $Z$ to be the random variable associated with the maximum number of balls in any bin. The we have
\begin{align}
E_p[Z] \ge k_0\frac{\log n}{\log\log n}\; \text{as}\; n\to\infty.
\end{align}
\noindent where $k_0$ is a scalar constant.
\end{lemma}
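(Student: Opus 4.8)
The plan is to exploit the majorization machinery assembled above, reducing the general non-uniform allocation to the uniform one, for which the maximum-load asymptotics are classical. First I would observe that if $X = (X_1, \ldots, X_n)$ denotes the vector of bin occupancies, where $X_i$ counts the balls landing in bin $i$, then $X$ has precisely the multinomial distribution appearing in Proposition \ref{prop:multnom}, with $n$ trials and cell probabilities $p = (p_1, \ldots, p_n)$. The quantity of interest is $Z = \max_i X_i = \phi(X)$, where $\phi(x) = \max_{1 \le i \le n} x_i$.

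The key structural step is to verify that $\phi$ is Schur-convex. Since $\phi(x) = x_{[1]}$ is a symmetric function of its coordinates and $\max$ is convex, $\phi$ is a symmetric convex function, and every such function is Schur-convex. Applying Proposition \ref{prop:multnom} with this $\phi$ then yields that $\psi(p) := E_p[\phi(X)] = E_p[Z]$ is a Schur-convex function of the probability vector $p$.

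Next I would invoke the fact that the uniform vector $u = (1/n, \ldots, 1/n)$ is the minimal element of the majorization order among all probability vectors: for any $p$ with nonnegative entries summing to $1$, the sum of the $k$ largest entries of $p$ is at least $k$ times the average entry $1/n$, so $\sum_{i=1}^{k} p_{[i]} \ge k/n = \sum_{i=1}^{k} u_{[i]}$ for every $k$, giving $p \succ u$. Combining this with the Schur-convexity of $\psi$ gives $E_p[Z] = \psi(p) \ge \psi(u) = E_u[Z]$; that is, among all sampling distributions the uniform one minimizes the expected maximum load. It then remains to lower bound $E_u[Z]$, the expected maximum load when $n$ balls are thrown uniformly at random into $n$ bins, i.e. the classical single-choice ($d=1$, $m=n$) regime recalled in the introduction. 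This is well known to be $\Theta(\log n / \log\log n)$; in particular there is a constant $k_0 > 0$ with $E_u[Z] \ge k_0 \log n / \log\log n$ for all large $n$, which together with the preceding inequality proves the claim.

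The main obstacle is less any computation than correctly pinning down the two majorization facts — the Schur-convexity of the maximum and the universal domination $p \succ u$ — so that Proposition \ref{prop:multnom} applies verbatim; once these are in place the lower bound on $E_u[Z]$ is a direct appeal to the classical uniform balls-and-bins analysis.
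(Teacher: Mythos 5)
Your proposal is correct and follows essentially the same route as the paper's own proof: express the bin loads as a multinomial vector, note that $\phi(x)=\max(x)=x_{[1]}$ is Schur-convex, use $p \succ (1/n,\dots,1/n)$ together with Proposition \ref{prop:multnom} to get $E_p[Z] \ge E_{(1/n,\dots,1/n)}[Z]$, and finish with the classical uniform balls-and-bins lower bound of $k_0 \log n/\log\log n$. The only differences are cosmetic: you justify Schur-convexity via symmetry plus convexity and prove $p \succ u$ directly, where the paper uses the definition of majorization and a citation to Marshall--Olkin, respectively.
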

\begin{proof}
Denote $X_i$ as the random variable associated with the load for  bin $i.$ Clearly $X = [X_1, X_2, \cdots, X_n]$ follows multinomial distribution
\begin{align}
\Pr[X = x] = {n \choose {x_1,\cdots,x_n}} \prod\limits_{i=1}^{n} p_i^{x_i}
\end{align}
We have $Z = \max(X_1,\cdots,X_n) = \phi(X).$ Clearly, $\phi(x) = \max(x)$ is a schur convex function since $\max(x) = x_{[1]}$ and if $x\succ y$ then $x_{[1]} \ge y_{[1]}$. Also, we have (Chapter 1, \cite{marshall79}): $(p_1,p_2,\cdots,p_n)\succ(1/n,1/n,\cdots,1/n).$
whenever $p_i \ge 0$ with $\sum_{i=1}^n p_i = 1.$ Thus applying Proposition \ref{prop:multnom} yields $E_p[Z] \ge E_{(1/n,\cdots,1/n)}[Z] \ge k_0\frac{\log n}{\log\log n}.$
\end{proof}

\begin{figure}
\centering
\hspace{-0.5cm}
\begin{minipage}{0.25\textwidth}
\includegraphics[width=1\textwidth, height = 0.8\textwidth]{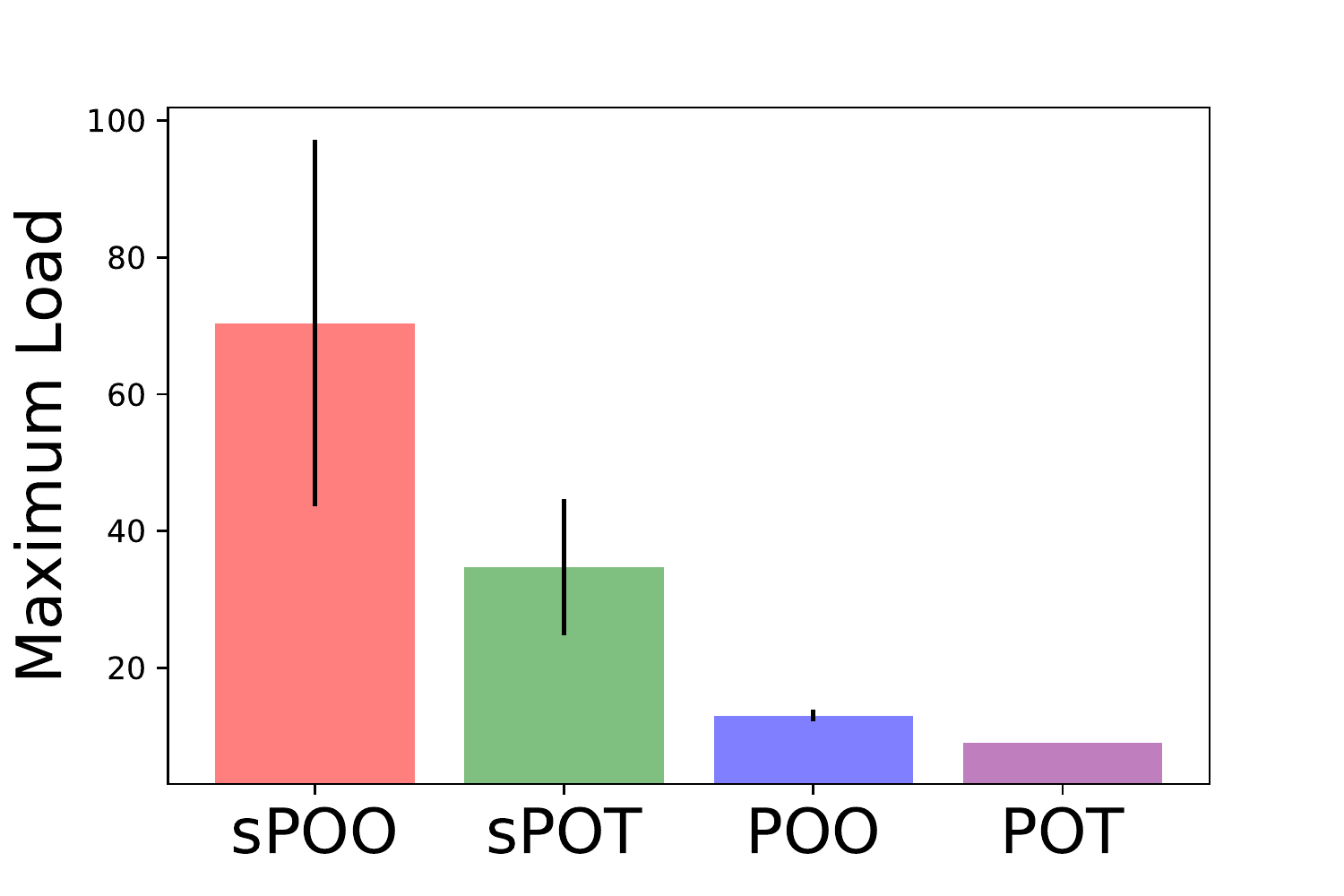}
\subcaption{Maximum load}
\end{minipage}
\hspace{-0.08in}
\begin{minipage}{0.25\textwidth}
\includegraphics[width=0.9\textwidth, height = 0.8\textwidth]{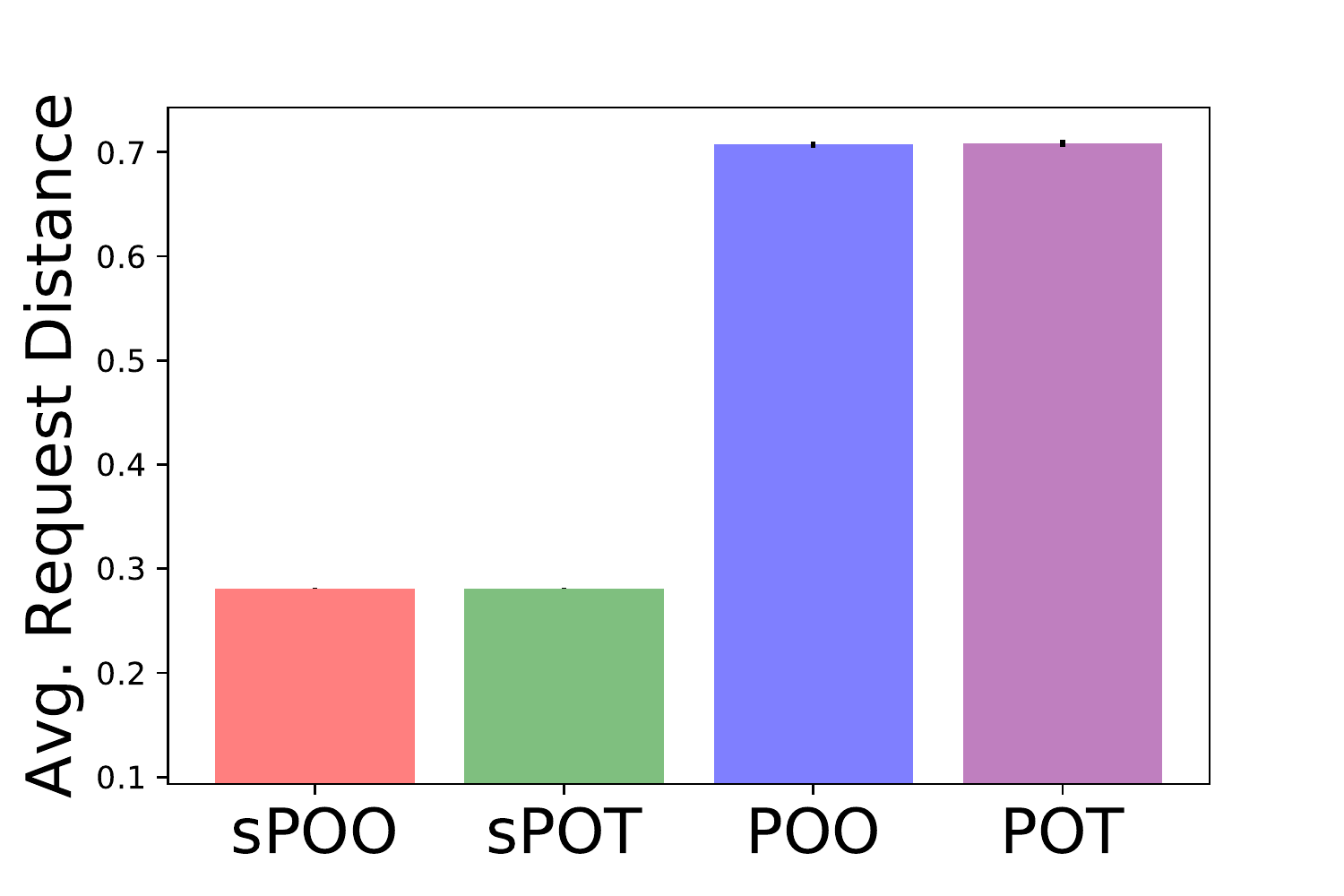}
\subcaption{Expected request distance}
\end{minipage}
\caption{Performance comparison of basic allocation policies  wrt (a) maximum load and (b) expected request distance for $n = 10000$ servers.}
\label{fig:alloc}
\vspace{-0.2in}
\end{figure}
\begin{theorem}\label{th:unispot}
Suppose both users and servers are placed independently and uniformly at random on $\mathcal{D}$. Under sPOT policy, the expected maximum load over all servers is  at least $\Omega(\frac{\log n}{\log\log n})$ with high probability of $1-1/n^{\Omega(1)},$ i.e., we do not get POT benefits.
\end{theorem}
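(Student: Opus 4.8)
The plan is to reduce sPOT under uniform placement to a single-choice balls-and-bins process whose bins are the cells of the second order Voronoi diagram $H_S^{(2)}$, and then apply the majorization bound of Lemma \ref{lm:majorization}. I identify each user with a ball and each cell of $H_S^{(2)}$ with a bin. By definition a cell of $H_S^{(2)}$ is the locus of points whose two nearest servers form a fixed unordered pair $\{s_i,s_j\}$, so under sPOT every user landing in that cell is allocated to $s_i$ or $s_j$. Since the users are i.i.d.\ uniform on $\mathcal{D}$, the vector of cell occupancies is multinomial with parameter $p$ equal to the vector of normalized cell areas, which is precisely the process analyzed in Lemma \ref{lm:majorization} with the number of bins equal to the number of cells.

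Next I would control the number of bins. By Lemma \ref{lm:cellsinhv2} the number $N$ of cells of $H_S^{(2)}$ is $O(3n)$. Running the Schur-convexity comparison of Lemma \ref{lm:majorization} (equivalently Proposition \ref{prop:multnom} with the Schur-convex objective $\max$) shows that the uniform distribution over the $N$ cells minimizes the expected maximum occupancy; and for $n$ balls thrown uniformly into $N = O(3n)$ bins the maximum occupancy is $\Omega(\log n/\log\log n)$, since reducing the bin count below $3n$ can only raise it. Hence the maximum number of users $C_{\max}$ falling in a single cell of $H_S^{(2)}$ satisfies $E[C_{\max}] = \Omega(\log n/\log\log n)$, with a matching bound holding with high probability from the standard tail analysis of the uniform allocation.

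Finally I would pass from cell occupancy to server load. Consider the cell attaining $C_{\max}$, with server pair $\{s_i,s_j\}$. All $C_{\max}$ of its users are assigned to $s_i$ or $s_j$, so by the pigeonhole principle one of the two servers receives at least $C_{\max}/2$ of them; its total load, and hence the maximum load over all servers, is therefore at least $\tfrac12 C_{\max} = \Omega(\log n/\log\log n)$. Absorbing the factor $1/2$ into the constant establishes Theorem \ref{th:unispot} and shows that sPOT does not attain the $O(\log\log n)$ load of true POT.

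The main obstacle is conceptual rather than computational: it lies in the reduction, specifically in arguing that the least-loaded tie-breaking within a cell cannot defeat the lower bound. This is exactly what the factor-$1/2$ pigeonhole step absorbs, and it is the reason the spatial correlation of the two choices destroys the POT benefit, since the two candidate servers of a user are never an independent uniform pair but are locked to its host cell. The one technical point I would write carefully is that Lemma \ref{lm:majorization} continues to apply when the number of bins is $O(3n)$ rather than exactly $n$, which follows from the identical Schur-convex comparison against the uniform distribution over the actual cells.
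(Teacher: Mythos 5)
Your proposal is correct and follows essentially the same route as the paper's own proof: cells of the second order Voronoi diagram $H_S^{(2)}$ as bins (with Lemma \ref{lm:cellsinhv2} bounding their number by $O(3n)$), the majorization comparison of Lemma \ref{lm:majorization} against the uniform allocation to get the $\Omega(\log n/\log\log n)$ cell-occupancy bound, and the factor-$1/2$ pigeonhole step to convert cell occupancy into server load. Your explicit care about applying Lemma \ref{lm:majorization} with $O(3n)$ bins rather than exactly $n$ is a point the paper itself glosses over, but it is the same argument.
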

\begin{proof}
Consider the second order Voronoi diagram: $H_S^{(2)}$ associated with the set of servers $S.$ W.l.o.g. consider a cell $\{s_i, s_j\}$ in $H_S^{(2)}$. The probability that a user selects the server pair $\{s_i, s_j\}$ as its two nearest servers is proportional to the area of the cell $\{s_i, s_j\}$. However, the area distribution of cells in $H_S^{(2)}$ is non-uniform (say with probability distribution $p$). We can invoke classical balls and bins argument on $H_S^{(2)}$ as follows inspired by the discussion in \cite{kenthapadi06}. We treat each cell in $H_S^{(2)}$ as a bin. Thus by Lemma \ref{lm:cellsinhv2}, there are $O(3n)$ bins (or cells). Each ball (or user) choses a bin (or a cell) from a distribution $p$  and let $L_{p}$ denote the expected maximum asymptotic load across the bins. Let $L_{U}$ denote  the expected maximum asymptotic load across the bins when $O(n)$ balls are assigned to $O(3n)$ bins with each ball choosing a bin uniformly at random. From classical balls and bins theory, $L_{U} = O(\log n/\log\log 3n) = O(\log n/\log\log n).$ Clearly, by Lemma \ref{lm:majorization}, we have $L_{p}  \ge L_{U} = O(\log n/\log\log n).$ Since a cell consists of a server pair, one of the server pair corresponding to the maximum load would have load at least $(1/2)L_{p}.$ Thus the maximum load across all servers would be at least $(1/2)L_{p} \ge O(\log n/\log\log n).$ 
\end{proof}

\subsection{Tradeoff between Load and Request Distance}
In this Section, we discuss the inherent tradeoff between maximum load and expected request distance metric among different allocation policies. We evaluate the performance of sPOT as compared to other allocation policies. We consider $n = 10000$ servers and an equal number of users placed at a unit square uniformly at random. We ran $10$ trials for each policy. We compare the performance of various allocation policies in Figure \ref{fig:alloc}. 

First, note that with respect to maximum load, the spatial based policies perform worse compared to their classical counterparts. Note that the introduction of spatial aspects into a policy increases its maximum load. For example, sPOT performs worse than both POO and POT as shown in Figure \ref{fig:alloc} (a). Since the maximum asymptotic load for POO is $O(\log n/\log \log n)$ with high probability, Figure \ref{fig:alloc} (a) validates our lower bound results obtained for sPOT in Theorem \ref{th:unispot}.


However, the expected request distance is smallest for sPOO and almost similar to that of sPOT. Also, both POT and POO have very high and similar expected request distances as shown in Figures \ref{fig:alloc} (b). Both results shown in Figure \ref{fig:alloc} (a) and (b) combined, illustrate the tradeoff between maximum load and expected request distance metric. 
%
\section {Improving Load and Request Distance Behavior}\label{sec:kspot}
\begin{figure}[t]
\centering
\hspace{-0.5cm}
\begin{minipage}{0.25\textwidth}
\includegraphics[width=1\textwidth, height = 0.8\textwidth]{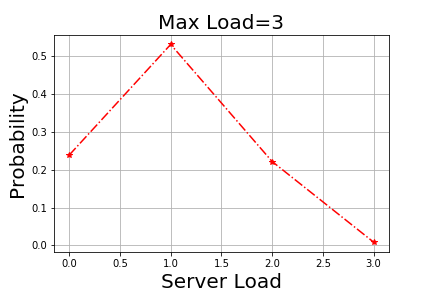}
\subcaption{dPOT Load distribution}
\end{minipage}
\hspace{-0.08in}
\begin{minipage}{0.25\textwidth}
\includegraphics[width=1\textwidth, height = 0.8\textwidth]{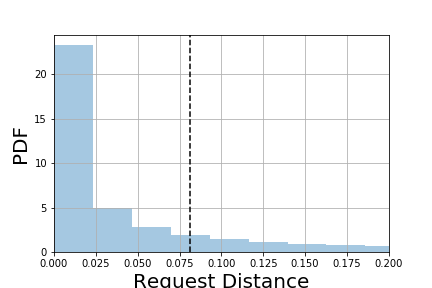}
\subcaption{dPOT distance distribution}
\end{minipage}
\hspace{-1.72in}
\begin{minipage}{0.25\textwidth}
\includegraphics[width=1\textwidth, height = 0.8\textwidth]{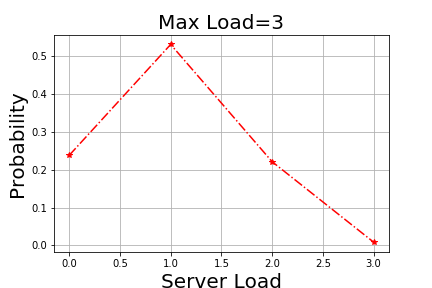}
\subcaption{POT Load distribution}
\end{minipage}
\hspace{-0.21in}
\begin{minipage}{0.25\textwidth}
\includegraphics[width=1\textwidth, height = 0.8\textwidth]{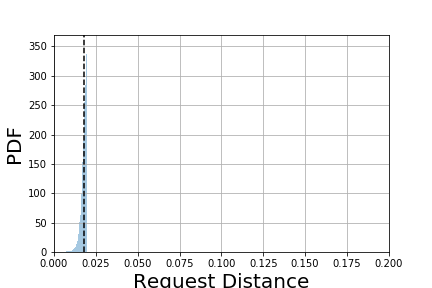}
\subcaption{sPOT distance distribution}
\end{minipage}
\begin{minipage}{0.25\textwidth}
\includegraphics[width=1\textwidth, height = 0.8\textwidth]{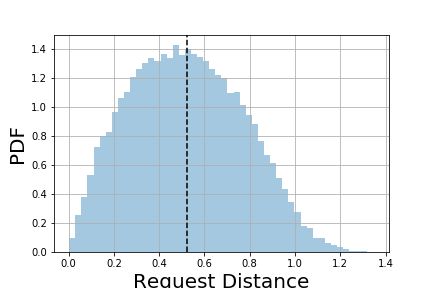}
\subcaption{POT distance distribution}
\end{minipage}
\caption{Performance comparison of allocation policies  wrt dPOT for $n = 50000$ servers. (a) and (b) plots are for dPOT while (c),(e) and (d) for POT and sPOT respectively.}
\label{fig:dPOT}
\vspace{-0.7cm}
\end{figure}
In Section \ref{sec:spot}, we showed that for both grid and uniform based server placement, sPOT does not provide POT benefits. As POT is oblivious to the spatial aspect of user and server distributions, it performs worse with respect to the expected request distance metric. Thus there exists a tradeoff between maximum load and expected request distance among different allocation policies. 

Note that for each user, once its arrival location is fixed, the sampling of two servers in sPOT policy is deterministic while it is completely random for POT. This random sampling over the entire set of servers results in better load behavior for POT  than for sPOT. However, since random sampling in POT is oblivious to the  distances of servers from the particular user, POT incurs very large expected request distance. Thus if one can design a policy with random and distance dependent sampling of servers, such a policy should provide benefits of both POT and sPOT in terms of maximum load and expected request distance respectively. Below we propose and evaluate two such  policies to get benefits of both POT and sPOT. We empirically show that they achieve both POT like load benefits while having a request distance profile similar to that of sPOT.
\subsection{Decay based POT ({\it dPOT})}

Consider the allocation of a random user $r_i$ in the service network. We propose a decay based POT (dPOT) policy to allocate $r_i$ as follows. Under dPOT, $r_i$ samples two servers from $S$ (without replacement), each with probability proportional to $1/d_{ip}^{2}$. Here, $d_{ip}$ denotes the euclidean distance between user $r_i$ and server $s_p.$ $r_i$ then gets allocated to the server with the least load among sampled servers. This rule is similar to the one used in small world routing~\cite{Kleinberg00}. Note that, since the sampling probability of a server is inversely proportional to its distance from $r_i,$ dPOT incurs a smaller expected request distance compared to POT. Surprisingly, dPOT achieves similar load behavior to that of POT. We compare the performance of dPOT to sPOT and POT as follows.

We perform a single simulation run for each of the policies: dPOT, sPOT, POT and measure the distributions of load values across all the servers and of the request distance. Figure \ref{fig:dPOT} (a) shows the load distribution and  Figure \ref{fig:dPOT} (b) shows the request distance distribution for dPOT. We plot the load distribution for POT and request distance distribution for sPOT in Figure \ref{fig:dPOT} (c) and (d) respectively.

First we focus on the server loads in Figure \ref{fig:dPOT} (a) and (c). Interestingly, the load distributions are almost identical for dPOT and POT. Similarly, sPOT performs better than dPOT in terms of request distance distribution as shown in Figure \ref{fig:dPOT} (b) and (d)
since they significantly favor closer nodes. However, compared to POT (as shown in Figure \ref{fig:dPOT} (e)), dPOT performs significantly better in terms of request distances. Thus dPOT achieves the best of both worlds, i.e., low maximum load and low distances. 

\subsection{Candidate set based sPOT ($k$-sPOT)}
We now propose a policy that improves the load behavior of sPOT. We define $C_k$ to be the candidate set (of size $k$) consisting of $k$ nearest servers for a particular user. Under $k$-sPOT, the user selects two servers uniformly at random from $C_k$ and assigns itself to the leastly loaded one. Thus, the formation of candidate set with the $k$ nearest criteria makes the sampling technique distance dependent. Also, the random sampling of two servers within the candidate set helps to balance load and reduce the overall maximum load. Clearly sPOT and POT are two extremes of the policy $k$-sPOT with $k = 2$ and $k = n$ respectively. Below, we discuss the effect of $k$ on maximum load and expected request distance behavior and compare it to other policies.
%

\begin{figure}
\centering
\hspace{-0.2in}
\begin{minipage}{0.22\textwidth}
\includegraphics[width=0.9\textwidth, height = 0.8\textwidth]{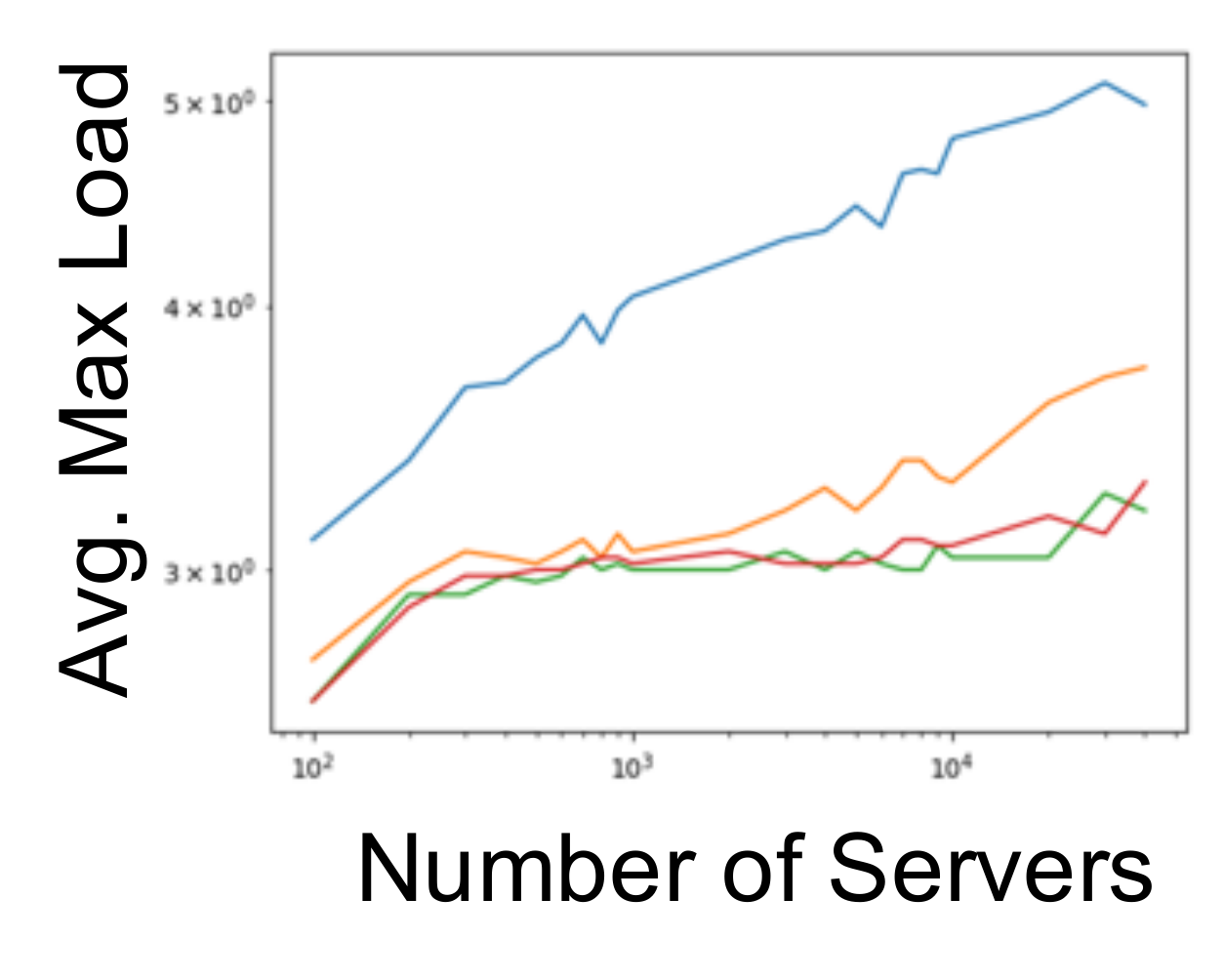}
\subcaption{}
\end{minipage}
\hspace{-0.08in}
\begin{minipage}{0.25\textwidth}
\includegraphics[width=1.2\textwidth, height = 0.8\textwidth]{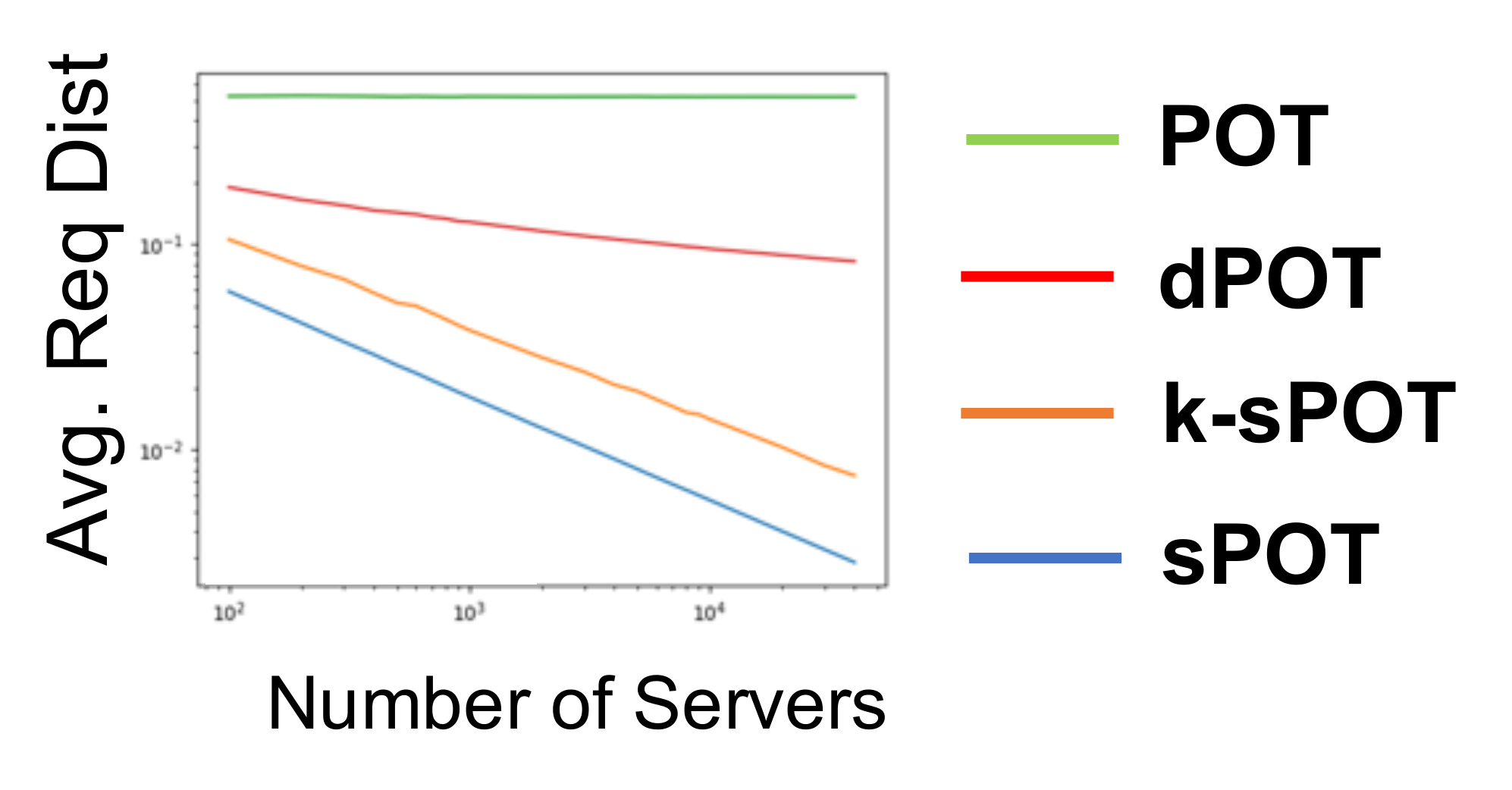}
\subcaption{}
\end{minipage}
\vspace{-0.1in}
\caption{Performance comparison of k-sPOT  with respect to (a) expected maximum load and (b) expected request distance.}
\label{fig:kspot}
\vspace{-0.2in}
\end{figure}

Figure \ref{fig:kspot} (a) shows the growth of average maximum load (averaged over 50 simulation runs for each point) as $n$ is varied from 100 to 40000. We observe that both dPOT and POT perform the best. k-sPOT with $k=\log n$ performs quite well compared to sPOT.  Also, we have observed through simulation that the average maximum load profiles are very similar for k-sPOT with finite k to that of sPOT. Based on these results, we present the following conjecture.
\begin{conjecture}
If the candidate set in k-sPOT does not grow with $n$, no POT benefit is expected.
\end{conjecture}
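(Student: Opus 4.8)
The plan is to lift the second-order Voronoi argument behind Theorem~\ref{th:unispot} to the $k$-th order diagram $H_S^{(k)}$, holding $k=O(1)$ throughout. For a user placed uniformly at random, its candidate set in $k$-sPOT is exactly its $k$ nearest servers, so the candidate set is a function only of which region of $H_S^{(k)}$ the user occupies: all users landing in a given cell share the same $k$-element candidate set, namely the $k$ servers that define that cell. The key observation is that, regardless of how $k$-sPOT resolves the power-of-two choice among the $\binom{k}{2}$ pairs inside a candidate set, every user in a cell is ultimately assigned to one of that cell's $k$ servers; hence the internal, load-coupled resolution of the POT step is irrelevant to a lower bound on the maximum load.

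First I would count bins. By the standard $O(k(n-k))$ bound on the combinatorial complexity of the order-$k$ Voronoi diagram (the generalization of Lemma~\ref{lm:cellsinhv2}; Chapter~3, \cite{okabe92}), the number of cells $N$ of $H_S^{(k)}$ is $O(kn)=\Theta(n)$ for fixed $k$. I treat each cell as a bin: a user falls into cell $c$ with probability $p_c$ proportional to the area of $c$, independently across users, giving a generally non-uniform distribution $p=(p_1,\dots,p_N)$.

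Next I would lower-bound the occupancy of the busiest cell. Since the $n$ users pick cells independently according to $p$, the cell-occupancy vector is multinomial, and the majorization argument of Lemma~\ref{lm:majorization} (Proposition~\ref{prop:multnom} together with $p\succ(1/N,\dots,1/N)$) yields $L_p\ge L_U$, where $L_p$ is the expected occupancy of the busiest cell and $L_U$ is the corresponding quantity for $n$ balls thrown uniformly into $N=\Theta(n)$ bins. The classical lightly-loaded balls-and-bins estimate gives $L_U=\Omega(\log n/\log\log n)$. One technical point here is the mild extension of Proposition~\ref{prop:multnom} to a number of trials ($n$) differing from the number of categories ($N$); this holds verbatim, since Schur-convexity of $\psi(p)=E_p\phi(X)$ never used trials equal to categories.

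Finally I would descend from cells to servers by pigeonhole. Pointwise, if $M$ is the occupancy of the busiest cell, its $M$ users are spread over that cell's $k$ servers, so at least one of them carries at least $M/k$; taking expectations, the expected maximum server load is at least $E[M]/k=L_p/k\ge L_U/k=\Omega(\log n/(k\log\log n))$, which equals $\Omega(\log n/\log\log n)$ precisely because $k$ is constant, matching POO and thus exhibiting no POT benefit. The main obstacle is exactly this division by $k$: it is the one step that breaks once $k$ grows with $n$, so the heart of a rigorous argument is to confirm that a constant $k$ simultaneously keeps $N=\Theta(n)$ (so that $L_U$ remains $\Omega(\log n/\log\log n)$) and renders the $1/k$ factor harmless. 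A secondary point to verify is that the bound survives the outer expectation over the random server placement, which is immediate since $L_U$ is configuration-independent and $N=\Theta(n)$ holds for every placement.
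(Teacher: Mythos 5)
The paper never proves this statement---it is left as a conjecture supported only by the simulations in Figure~\ref{fig:kspot}---so there is no in-paper proof to compare against; what you have written is an actual argument where the paper offers only empirical evidence. Your route is the natural lift of the paper's own proof of Theorem~\ref{th:unispot} (which is exactly the $k=2$ case) from the second-order to the order-$k$ Voronoi diagram, and the plan is sound: every user landing in a cell of $H_S^{(k)}$ is assigned to one of the $k$ servers defining that cell no matter how the load-dependent power-of-two step resolves, so cell occupancies are multinomial (they depend only on the i.i.d.\ uniform user locations, not on the policy); majorization via Proposition~\ref{prop:multnom} lower-bounds the busiest cell by the uniform case over $N = O(k(n-k)) = O(n)$ bins; and the pigeonhole division by the constant $k$ preserves $\Omega(\log n/\log\log n)$, ruling out the $O(\log\log n)$ POT behavior. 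You also correctly identify the technical debts, which are the same ones the paper itself incurs silently in Theorem~\ref{th:unispot}: Proposition~\ref{prop:multnom} and Lemma~\ref{lm:majorization} are stated for $n$ balls into $n$ bins and need the (routine) extension to $N \neq n$ categories, and the order-$k$ analogue of Lemma~\ref{lm:cellsinhv2}, the $O(k(n-k))$ cell-count bound for $H_S^{(k)}$, must be invoked from \cite{okabe92}. Two small refinements: you only need the upper bound $N = O(n)$, not $N = \Theta(n)$ (fewer or more skewed cells only increases the busiest-cell occupancy, by the same majorization), and your argument covers precisely the regime $k = O(1)$ that the conjecture addresses, so the scope matches; if made rigorous on the flagged points, your sketch would upgrade the paper's conjecture to a theorem.
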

Figure \ref{fig:kspot} (b) shows how the average request distance drops as $n$ increases (since the node density increases). We observe that, not surprisingly, sPOT outperform the rest. However, k-sPOT with $k=\log n$ performs quite well. Thus k-sPOT with $k=O(\log n)$ achieves good performance for both load and request distance.

\begin{remark}
Note that, since dPOT selects servers through distance based sampling, change in positions of servers theoretically requires choosing a new set of sampling distributions. However, sampling in $k$-sPOT depends on the log neighborhood of the user, thus involves less frequent updates for server sampling distributions.
\end{remark}

\section{Conclusion}\label{sec:con}
%

In this work we considered a class of power of two choices based allocation policy where both resources and users are located on a two-dimensional plane. We analyzed the sPOT policy and provided expressions for the lower bound on the asymptotic maximum load on the resources. We claim that for both grid and uniform based resource placement, sPOT does not provide POT benefits.  We proposed two non-uniform euclidean distance based server sampling policies that achieved the best load and request distance behavior. Experimental results validate the effectiveness of our proposed policies. Finally, going further, we aim at extending our results to consider dynamic arrival of mobile users on the euclidean plane.


\bibliographystyle{abbrv}
\bibliography{refs}  




\end{document}